\numberwithin{equation}{section}
\theoremstyle{definition}
\newtheorem{theorem}{Theorem}[section]
\newtheorem{corollary}[theorem]{Corollary}
\newtheorem{proposition}[theorem]{Proposition}
\newtheorem{definition}[theorem]{Definition}
\newtheorem{notation}[theorem]{Notation}
\newtheorem{remark}[theorem]{Remark}
\newtheorem{lemma}[theorem]{Lemma}
\newcommand{\numberset}{\mathbb}
\newcommand{\Z}{\numberset{Z}}
\newcommand{\R}{\numberset{R}}
\newcommand{\C}{\mathcal{C}}
\newcommand{\F}{\numberset{F}}
\newcommand{\HH}{\textnormal{H}}
\newcommand{\mC}{\mathcal{C}}
\newcommand{\dH}{d^{\textnormal{H}}}
\newcommand{\wH}{\omega^{\textnormal{H}}}
\newcommand\red[1]{{{\textcolor{red}{#1}}}}
\title[]{\large{\textbf{Duality and LP Bounds for Codes with Locality}}}
\author{Anina Gruica$^1$}
\address{$^1$Eindhoven University of Technology, the Netherlands.}
\thanks{$^1$A. G. is supported by the Dutch Research Council through grant OCENW.KLEIN.539. A. R. is supported by the Dutch Research Council through grants OCENW.KLEIN.539 and VI.Vidi.203.045.}
\author{Benjamin Jany$^2$}
\address{$^2$University of Kentucky, USA.}
\author{Alberto Ravagnani$^1$}
\begin{document}

\maketitle

\begin{abstract}
We initiate the study of the duality theory of locally recoverable codes, with a focus on the applications. We characterize the locality of a code in terms of the dual code, and introduce a class of invariants that refine the classical weight distribution. In this context, we establish a duality theorem analogous to (but very different from) a MacWilliams identity. As an application of our results, we obtain two new bounds for the parameters of a locally recoverable code, including an LP bound that improves on the best available bounds in several instances.
\end{abstract}


\medskip

\section{Introduction}

The ever increasing amount of data stored online
requires more and more efficient 
distributed storage and retrieval systems.
In this context, locally recoverable codes (LRC) have emerged as a solution that allows
to repair a single erasure by accessing a small set of servers,
but also to repair several erasures simultaneously, if needed.
LRC codes have been extensively studied in the last decade; see~\cite{tamo2014family,Gopalan,papailiopoulos2014locally,tamo2016optimal,Westerback,han20,guruswami19,hao20,Micheli,goparaju2014binary,silberstein2015optimal,silberstein2013optimal} among many others.

The parameters of an LRC code reflect those of the corresponding distributed storage system. If the code $\mC$ has alphabet $\F_q$ (the finite field with $q$ elements), length $n$, dimension $k$, minimum distance $d$, and locality $r$, then the corresponding system has $n$ severs, can store $k$ symbols, can recover each code symbol by contacting~$r$ servers, and can repair a total of $d-1$ erasures.

A natural and to date still wide open question asks to
describe how the parameters of an LRC code relate to each other. The most famous result in this context is Theorem~\ref{thm:single} below, which 
expresses a lower bound for the number of servers for fixed parameters $(k,d,r)$.
Several other bounds have been established; see~\cite{hu16,cadambe2015bounds, tamo2016bounds, hao2020bounds, agarwal2018combinatorial, chen2020improved, wang2019bounds, balaji2017bounds,guruswami2019long,tamo2016cyclic,fang2022bounds}.

In this paper, we initiate the study of how the parameters of an LRC code relate to the parameters of its dual code. The main result that we obtain in this context is a MacWilliams-type identity for a set of invariants that refine the weight distribution of a code, and that capture the concept of locality; see Theorem~\ref{thm:dualitywij}.
While we call our result a MacWilliams-type identity, it does not fit under the classical framework
of MacWilliams identities based on group partitions; see e.g.~\cite{gluesing2015fourier}.

The focus of this paper is on the concrete applications of the mentioned duality theory for LRC codes. More precisely, we apply our results to obtain two bounds for the parameters of an LRC code.
The first bound relates the locality of a code to its length, dimension, minimum distance, dual distance, and field size. To our best knowledge, this is the first bound that attempts to describe the connections among these parameters. The second bound is an LP-type result, which is based on the MacWilliams-type identities we mentioned earlier. The bound improves on the best available bounds for the dimension of an LRC code of given parameters $(q,n,d,r)$ in several instances.

\section{Preliminaries}
Throughout this paper, $q$ is a prime power, $\F_q$ is the finite field of $q$ elements, and $n \ge 2$ is an integer.  We start by recalling some notions from classical coding theory; see for instance~\cite{macwilliams1977theory}.

\begin{definition}
A (\emph{linear}) \emph{code} 
is a non-zero $\F_q$-linear subspace~$\mC \le \F_q^n$.  
The \emph{minimum} (\emph{Hamming}) \emph{distance} of $\mC$ is 
$$\dH(\mC) = \min\{\wH(x) \mid x \in \mC, \; x \ne 0\},$$
where $\wH$ denotes the (Hamming) weight on $\F_q^n$. 
\end{definition}

It is well-known that the $\F_q$-dimension $k$ of a code $\mC$ satisfies $k \le n- \dH(\mC)+1$. This inequality is the famous \emph{Singleton Bound} and codes meeting it with equality are called \emph{MDS} codes. 

An LRC code is a linear code for which an additional parameter called \emph{locality} is considered. The latter is defined as follows.

\begin{definition}\label{def:locality}
A code $\mC \le \F_q^n$
has \emph{locality} $r$ if for every $i \in \{1,...,n\}$ there exists a set $S_i$, called a \emph{recovery set} for (coordinate) $i$, with the following properties: 
\begin{enumerate}
 \item $i \notin S_i$,
 \item $|S_i| \le r$,
 \item if $x,y \in \mC$
 and $x_j=y_j$ for all $j \in S_i$, then $x_i=y_i$.
\end{enumerate}
\end{definition} 
Given a recovery set $S_i$ for $i$, it is possible to reconstruct the coordinate $x_i$ of any codeword $x \in \mC$ using a \emph{recovery function} $f_i : \pi_{S_i}(\mC) \rightarrow \F_q$, where $\smash{\pi_{S_i}: \F_q^n \to \F_q^{|S_i|}}$ is the projection onto the coordinates indexed by the elements of~$S_i$.
More formally, $f_i(y)$ is the only field element $\alpha \in \F_q$ for which there exists $x \in \mC$ with $x_i=\alpha$ and $\pi_{S_i}(x)=y$.

Recent work from \cite{tan2021minimum} introduced the notion of linear locality. The latter is defined similarly as in Definition~\ref{def:locality}, with the additional requirement that a recovery set $S_i$ for $i$ must repair coordinate $i$ via a linear recovery function. However, as the following proposition shows, if $\mC$ is a linear code, recovery functions must always be linear. Hence for linear codes, the notion of linear locality is equivalent to that of locality. 

\begin{proposition}\label{thm:linearfct}
Let $\mC \le \F_q^n$ be a linear code and $S_i$ be a recovery set for the coordinate $i$ with recovery function $f_i$. Then $f_i$ is an $\F_q$-linear map. 
\end{proposition}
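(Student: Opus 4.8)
The plan is to exploit the fact that $\mC$ is closed under addition and scalar multiplication in order to pin down $f_i$ on the whole projected code $\pi_{S_i}(\mC)$. First I would set $V = \pi_{S_i}(\mC) \le \F_q^{|S_i|}$, which is itself an $\F_q$-linear subspace since it is the image of a linear map. Note that $f_i$ is well defined precisely because condition (3) of Definition~\ref{def:locality} says that $x_i$ is determined by $\pi_{S_i}(x)$; equivalently, the map $\mC \to V \times \F_q$, $x \mapsto (\pi_{S_i}(x), x_i)$, has image equal to the graph of $f_i$. So it suffices to show that this graph $\Gamma = \{(\pi_{S_i}(x), x_i) : x \in \mC\}$ is an $\F_q$-subspace of $V \times \F_q$: a function whose graph is a subspace is automatically linear.

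The key step is then almost immediate: $\Gamma$ is the image of the linear map $\mC \to \F_q^{|S_i|} \times \F_q$ sending $x \mapsto (\pi_{S_i}(x), x_i)$, and the image of a linear map between $\F_q$-vector spaces is a subspace. Hence $\Gamma$ is closed under addition and scalar multiplication. Unwinding this: given $y, y' \in V$, pick $x, x' \in \mC$ with $\pi_{S_i}(x) = y$, $\pi_{S_i}(x') = y'$; then $x + x' \in \mC$ has $\pi_{S_i}(x+x') = y + y'$ and $(x+x')_i = x_i + x_i' = f_i(y) + f_i(y')$, so by well-definedness $f_i(y+y') = f_i(y) + f_i(y')$. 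Similarly $f_i(\alpha y) = \alpha f_i(y)$ for $\alpha \in \F_q$, using $\alpha x \in \mC$. Therefore $f_i$ is $\F_q$-linear.

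There is no real obstacle here; the only thing to be a little careful about is the logical structure — making sure one invokes well-definedness of $f_i$ (which is exactly condition (3)) at the right moment, namely to conclude that the value $f_i(y+y')$ computed via the codeword $x + x'$ agrees with the value computed via any other preimage. I would phrase the argument as "the graph of $f_i$ is the image of a linear map, hence a subspace, hence $f_i$ is linear," and then optionally spell out the one-line verification above for readers who prefer the explicit computation.
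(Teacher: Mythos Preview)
Your proposal is correct and is essentially the same argument as the paper's: both observe that the graph of $f_i$ equals $\pi_{S_i \cup \{i\}}(\mC)$ (your $\Gamma$), which is the image of a linear map and hence an $\F_q$-subspace, forcing $f_i$ to be linear. The paper just phrases the graph as $\pi_{S_i \cup \{i\}}(\mC)$ and checks $f_i(x+\alpha y)=f_i(x)+\alpha f_i(y)$ directly, exactly as in your ``unwinding'' paragraph.
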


\begin{proof}
Suppose $S_i := \{1, \ldots, r\}$ and $i > r$ without loss of generality. Let $x, y \in \pi_{S_i}(\mC)$ and $\alpha \in \F_q$. Using the properties of the recovery set $S_i$, it is easy to see that $x \in \pi_{S_i}(\mC)$ if and only if $(x, f_i(x)) \in \pi_{S_i \cup \{i\}}(\mC)$. Since $\pi_{S_i \cup \{i\}}(\mC)$ is a linear code, $(x + \alpha y, f_i(x) + \alpha f_i(y)) \in \pi_{S_i \cup \{i\}}(\mC)$. This implies that $f_i(x + \alpha y) = f_i(x) + \alpha f_i(y)$, which concludes the proof. 
\end{proof}

It is a classical problem in coding theory to understand the different trade-offs between code parameters.
The most famous of these problems asks to determine the largest dimension of a linear code defined over $\F_q$ with fixed length $n$ and minimum distance~$d$. An analogous question arises in the study of LRC codes where locality is considered as well. In~\cite{Gopalan}, the authors showed how locality impacts the maximum dimension of an LRC code by establishing a generalization of the Singleton Bound. The bound reads as follows.

\begin{theorem}[Generalized Singleton Bound] \label{thm:single}
Let $\mC \le \F_q^n$ be a code with locality $r$, dimension $k$ and minimum distance~$d$. Then 
\begin{equation}\label{eqt:single}
    k + \left \lceil \frac{k}{r} \right \rceil \leq n - d+2.
\end{equation}
\end{theorem}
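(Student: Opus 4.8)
The plan is to argue via the rank function $\rho(S):=\dim_{\F_q}\pi_S(\mC)$ on subsets $S\subseteq\{1,\dots,n\}$, which is monotone, satisfies $\rho(S\cup\{i\})\le\rho(S)+1$, and has $\rho(\{1,\dots,n\})=k$. Two elementary facts do the work. \emph{(Distance vs.\ projection.)} If $\rho(S)<k$ then $\pi_S$ has nonzero kernel on $\mC$, so $\mC$ contains a nonzero codeword supported in the complement of $S$; hence $d\le n-|S|$, i.e.\ $|S|\le n-d$. \emph{(Extension.)} If $\rho(S)<k$ then $\pi_S$ is not injective on $\mC$, so some coordinate $i\notin S$ is not a function of the coordinates in $S$, i.e.\ $\rho(S\cup\{i\})=\rho(S)+1$. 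Given these, it suffices to produce a set $S$ with $\rho(S)=k-1$ and $|S|\ge k+\lceil k/r\rceil-2$, and then apply the distance fact.

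To build such a set I would run a greedy \emph{block} process using locality. For a coordinate $i$ with recovery set $S_i$, property~(3) of Definition~\ref{def:locality} says $i$ is determined by $S_i$, so $\rho(S\cup S_i\cup\{i\})=\rho(S\cup S_i)\le\rho(S)+|S_i|\le\rho(S)+r$. If moreover $i\notin S$ is chosen not to be determined by $S$ (possible by the extension fact whenever $\rho(S)<k$), then adjoining the block $S_i\cup\{i\}$ raises $|S|$ by $|S_i\setminus S|+1$ but $\rho(S)$ by at most $|S_i\setminus S|$, so the gap $|S|-\rho(S)$ grows by at least $1$ while $\rho$ grows by at least $1$ and at most $r$. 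Starting from $S_0=\emptyset$ and iterating, let $S_m$ be the last set reached with $\rho(S_m)\le k-1$. The process never stalls while $\rho<k$ (an undetermined coordinate, hence a block, always exists), so the next block pushes $\rho$ to exactly $k$; this forces $\rho(S_m)\ge k-r$. Combining the gap bound $|S_m|\ge\rho(S_m)+m$ with $\rho(S_m)\le mr$ (each of the $m$ blocks contributes at most $r$ to $\rho$) gives $m\ge\lceil k/r\rceil-1$. Finally, since $\rho(S_m)\le k-1$, I would enlarge $S_m$ to a set $S$ by adjoining $k-1-\rho(S_m)$ further coordinates that each increase $\rho$ by exactly $1$ (extending a partial column basis of a generator matrix), so $\rho(S)=k-1$ and $|S|=|S_m|+(k-1-\rho(S_m))\ge(\rho(S_m)+m)+(k-1-\rho(S_m))=k-1+m\ge k+\lceil k/r\rceil-2$. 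Since $\rho(S)=k-1<k$, the distance fact gives $|S|\le n-d$, hence $k+\lceil k/r\rceil\le n-d+2$, which is \eqref{eqt:single}.

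The only genuinely delicate point, and the step I expect to be the main obstacle, is the bookkeeping at the end: naively stopping the block process the first time $\rho$ reaches $k$ can leave $\rho(S_m)$ as small as $k-r$, and then $|S_m|\ge\rho(S_m)+m$ is too weak to recover the $\lceil k/r\rceil$ term. The resolution is to keep the block phase (which buys the $\lceil k/r\rceil$ contribution through the growth of the gap $|S|-\rho(S)$) separate from a padding phase with individual coordinates (which tops $\rho$ up to $k-1$ at a cost of exactly one coordinate per unit of rank). I would also be careful to verify that the block process cannot stall before $\rho$ hits $k$, and to check that the degenerate regime $r\ge k$ is handled cleanly (there the bound $m\ge\lceil k/r\rceil-1$ becomes vacuous and the argument collapses to the ordinary Singleton bound). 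Note that Proposition~\ref{thm:linearfct} is not needed: the only structural input is property~(3) of Definition~\ref{def:locality}, that a recovery set determines its target coordinate.
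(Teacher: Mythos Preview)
The paper does not prove Theorem~\ref{thm:single}; it is quoted from~\cite{Gopalan} as a known result, so there is no in-paper argument to compare against. Your proposal is essentially the original argument of Gopalan, Huang, Simitci, and Yekhanin: build a set $S$ with $\rho(S)=k-1$ and $|S|\ge k+\lceil k/r\rceil-2$ by running a block phase (adjoin a recovery set together with its target coordinate, which increases the gap $|S|-\rho(S)$ by at least one each round) followed by a padding phase (top $\rho$ up to $k-1$ one coordinate at a time), and then invoke $|S|\le n-d$. The argument is correct and the edge case $r\ge k$ is handled.

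One small expository slip: the sentence ``Combining the gap bound $|S_m|\ge\rho(S_m)+m$ with $\rho(S_m)\le mr$ \ldots\ gives $m\ge\lceil k/r\rceil-1$'' misidentifies which two facts are being combined. The gap bound plays no role in bounding $m$; what you actually use is $\rho(S_m)\ge k-r$ (stated in the preceding sentence) together with $\rho(S_m)\le mr$, giving $mr\ge k-r$ and hence $m\ge\lceil k/r\rceil-1$. Equivalently and more directly, the $(m{+}1)$st block reaches $\rho=k$, so $k\le (m{+}1)r$ forces $m+1\ge\lceil k/r\rceil$. The gap bound is then used in the padding step exactly as you describe.
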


Note that the bound of Theorem~\ref{thm:single} coincides with the classical Singleton Bound if~$k=r$. Codes whose parameters meet the bound (\ref{eqt:single}) with equality are called \emph{optimal LRC} codes. In~\cite{tamo2014family} it has been shown that the bound is tight when $q \geq n$, $r\mid k$ and $r+1 \mid n$.
However, for $q < n$, or if the divisibility constraints are not satisfied,
similarly to MDS codes, optimal LRC codes do not always exist. In order to determine the existence of optimal LRC codes for small~$q$, it has therefore become of interest to establish bounds dependent on both the locality and the underlying field size of a code. The following is a shortening bound established in~\cite{cadambe2015bounds}, that improves the Singleton-type bound in Theorem~\ref{thm:single}. In the minimum we include the (trivial) case $t=0$, even though the original statement doesn't. Note that for some parameters the minimum is indeed attained by $t=0$.

\begin{theorem}\label{thm:kopt}
Let $\mC \le \F_q^n$ be a code with locality $r$, dimension $k$ and minimum distance $d$. We have
\begin{equation}\label{eqt:kopt}
    k \le \min_{\substack{t \in \Z \\ t\ge 0}}\left\{rt+k_{\textnormal{opt}}^{(q)}(n-t(r+1),d)\right\},
\end{equation}
where $k_{\textnormal{opt}}^{(q)}(n,d)$ is the largest possible dimension of a code of length $n$ and minimum distance $d$ over $\F_q$.
\end{theorem}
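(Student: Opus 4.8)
The plan is to prove the bound by induction on the parameter $t$ appearing in the minimum, using the idea of "puncturing away a recovery set" to reduce the locality problem to a problem about ordinary codes. The base case $t=0$ is immediate: any code of length $n$, minimum distance $d$, and dimension $k$ satisfies $k \le k_{\textnormal{opt}}^{(q)}(n,d)$ by the very definition of $k_{\textnormal{opt}}^{(q)}$. So the whole content is the inductive step, where it suffices to show that
\begin{equation*}
k \le r + k_{\textnormal{opt}}^{(q)}(n-(r+1), d')
\end{equation*}
for a suitable code of length $n-(r+1)$ and some minimum distance $d' \ge d$, and then iterate.

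The key step is the following shortening argument. Fix a coordinate $i \in \{1,\dots,n\}$ together with a recovery set $S_i$ for $i$, and set $T = S_i \cup \{i\}$, so $|T| \le r+1$; by padding $S_i$ with extra coordinates if necessary we may assume $|T| = r+1$ exactly. Let $\mC'$ be the code obtained by first \emph{shortening} $\mC$ at the coordinates of $T$ (i.e. taking the subcode of codewords that vanish on $T$) and then deleting those $r+1$ coordinates. Then $\mC'$ has length $n-(r+1)$, and its minimum distance is at least $d$ since it is a subcode of a punctured... no — more carefully, shortening never decreases the minimum distance, so $\dH(\mC') \ge d$ (or the code is zero, in which case $k \le r+1 \le r + k_{\textnormal{opt}}^{(q)}(\cdot)$ trivially provided $k_{\textnormal{opt}} \ge 1$). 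It remains to control $\dim \mC'$ from below in terms of $k$. Here one uses that $\pi_T(\mC)$ has dimension at most $r$: indeed, by Proposition~\ref{thm:linearfct} the coordinate indexed by $i$ is an $\F_q$-linear function of the coordinates indexed by $S_i$ on $\pi_T(\mC)$, so $\pi_T(\mC) \cong \pi_{S_i}(\mC) \le \F_q^{|S_i|}$ has dimension at most $|S_i| \le r$. The exact sequence relating $\mC$, its projection onto $T$, and its shortening at $T$ then gives
\begin{equation*}
\dim \mC' = \dim \mC - \dim \pi_T(\mC) \ge k - r.
\end{equation*}

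Now I would check that the shortened code $\mC'$ still \emph{has locality} $r$ (or at least that we can continue the argument), so that the induction can proceed: for any surviving coordinate $j$, take its original recovery set $S_j$; unfortunately $S_j$ may intersect the deleted set $T$, so the naive choice does not obviously work, and this is where the main obstacle lies. The standard fix — and the one I expect the original proof in~\cite{cadambe2015bounds} uses — is to not insist that the intermediate codes retain locality $r$ in the strict sense, but instead to run the removal process greedily $t$ times, each time deleting an $(r+1)$-subset that is the union $S_i \cup \{i\}$ of a recovery set and its coordinate \emph{in the original code $\mC$}, chosen so that these $t$ subsets are pairwise disjoint. As long as such disjoint subsets can be chosen, each removal deletes $r+1$ coordinates and costs at most $r$ in dimension (by the dimension count above applied inside $\mC$, using that the projections onto disjoint coordinate sets are "independent" enough), while the minimum distance of what remains is $\ge d$. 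After $t$ such steps we are left with a code of length $n - t(r+1)$, dimension $\ge k - rt$, and minimum distance $\ge d$, whence $k - rt \le k_{\textnormal{opt}}^{(q)}(n - t(r+1), d)$, which is exactly~\eqref{eqt:kopt}. The one genuinely delicate point is justifying that the greedy choice of $t$ pairwise disjoint sets of the form $S_i \cup \{i\}$ is always possible for every $t$ for which the right-hand side of~\eqref{eqt:kopt} is a constraint — or, alternatively, arguing that if at some stage no fresh disjoint recovery set exists then the remaining coordinates are few enough that the bound holds for a smaller value of $t$; handling this bookkeeping cleanly is the part of the proof that requires the most care.
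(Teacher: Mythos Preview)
The paper does not prove Theorem~\ref{thm:kopt} at all: it is quoted from~\cite{cadambe2015bounds} as background, with no argument supplied. So there is no ``paper's own proof'' to compare against.

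That said, your sketch captures the essential mechanism of the Cadambe--Mazumdar argument: greedily absorb sets of the form $S_i\cup\{i\}$, use Proposition~\ref{thm:linearfct} to see that each such absorption raises $\dim\pi_T(\mC)$ by at most $r$ while raising $|T|$ by at most $r+1$, and then shorten. Where your write-up wobbles is exactly where you flag it: you try to force the $t$ sets to be pairwise disjoint, and this is neither guaranteed nor needed. The clean bookkeeping is to allow overlap and instead track the invariant
\[
|T_j|-\dim\pi_{T_j}(\mC)\ \ge\ j,
\]
which follows because at step $j$ you choose $i_j\notin T_{j-1}$, and since $x_{i_j}$ is a linear function of $x|_{S_{i_j}}$ one has $\dim\pi_{T_j}(\mC)=\dim\pi_{T_{j-1}\cup S_{i_j}}(\mC)\le \dim\pi_{T_{j-1}}(\mC)+|S_{i_j}\setminus T_{j-1}|$, while $|T_j|=|T_{j-1}|+1+|S_{i_j}\setminus T_{j-1}|$. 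After $t$ steps, with $m:=|T_t|\le t(r+1)$, the shortened code has length $n-m$, dimension $\ge k-(m-t)$, and minimum distance $\ge d$, giving
\[
k\ \le\ (m-t)+k_{\textnormal{opt}}^{(q)}(n-m,d).
\]
Finally, $m\mapsto m+k_{\textnormal{opt}}^{(q)}(n-m,d)$ is nondecreasing (shortening shows $k_{\textnormal{opt}}^{(q)}(n-m-1,d)\ge k_{\textnormal{opt}}^{(q)}(n-m,d)-1$), so the right-hand side is maximized at $m=t(r+1)$, yielding~\eqref{eqt:kopt}. Your ``delicate point'' thus dissolves once you stop insisting on disjointness; the rest of your outline is correct.
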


Although the bound of Theorem~\ref{thm:kopt} is a refinement of Theorem~\ref{thm:single} for all parameter sets there is a ``computational'' drawback to this refinement.
Indeed, determining the value of~$k_{\textnormal{opt}}^{(q)}(n,d)$ for given $d, n, q$ is still a wide open problem in classical coding theory, which makes the RHS of \eqref{eqt:kopt} difficult to evaluate.

In this paper, we derive two bounds for the parameters of LRC codes that
are based on their duality theory
and improve on 
the state of the art.
Note that the connection
between the locality of a code and that of its dual code is in general unclear and quite unexplored: a secondary goal of this paper is also to start filling this important gap.

\section{A Duality Theory for Codes with Locality} \label{sec:dual}

In this section we start developing 
a duality theory for codes with locality. A more extensive treatment of this topic is anticipated in an extended version of this work.

The results established in this section will allow us to derive two bounds for the parameters of an LRC code based on duality arguments. We will present the bounds later in Section~\ref{sec:appl}, while this section  focuses on the theoretical aspects.

We start by explaining how the locality of a code $\mC \leq \F_q^n$ can be characterized in terms of the support of codewords in the dual code $\mC^{\perp}$. Recall that for a vector $x \in \F_q^n$, its \emph{support} is defined as $\sigma(x) := \{ i \mid x_i \neq 0\}$. A code $\mC$ is said to be \emph{non-degenerate} if for all $i \in [n]$ there exist $x \in \mC$ such that $i \in \sigma(x)$. Using the notion of support, we obtain the following equivalent characterization of the locality parameter, which we use extensively in our approach.

\begin{lemma}\label{local/sup}
Let $r \ge 1$ be an integer. A linear code $\mC \le \F_q^n$ has locality $r$ if and only if for any $i \in \{1, \dots, n\}$ there exists $x \in \mC^{\perp}$ with $i \in \sigma(x)$ and $\wH(x) \le r+1$.
\end{lemma}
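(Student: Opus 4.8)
The plan is to prove both implications by translating the local recovery condition into a statement about the existence of a parity check supported on $S_i \cup \{i\}$.

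For the forward direction, suppose $\mC$ has locality $r$, and fix $i$ with a recovery set $S_i$. By Proposition~\ref{thm:linearfct}, the recovery function $f_i$ is $\F_q$-linear, so there exist scalars $(\lambda_j)_{j \in S_i}$ with $x_i = \sum_{j \in S_i} \lambda_j x_j$ for every $x \in \mC$. Rearranging, the vector $y \in \F_q^n$ defined by $y_i = 1$, $y_j = -\lambda_j$ for $j \in S_i$, and $y_\ell = 0$ otherwise, satisfies $\langle x, y \rangle = 0$ for all $x \in \mC$, i.e.\ $y \in \mC^\perp$. By construction $i \in \sigma(y)$ and $\sigma(y) \subseteq S_i \cup \{i\}$, so $\wH(y) \le |S_i| + 1 \le r + 1$, as desired.

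For the converse, suppose that for each $i$ there is $y^{(i)} \in \mC^\perp$ with $i \in \sigma(y^{(i)})$ and $\wH(y^{(i)}) \le r+1$. Set $S_i := \sigma(y^{(i)}) \setminus \{i\}$; then $i \notin S_i$ and $|S_i| \le r$. It remains to check property (3) of Definition~\ref{def:locality}: if $x, x' \in \mC$ agree on $S_i$, then $x_i = x'_i$. Since $y^{(i)} \in \mC^\perp$, we have $\sum_{\ell} y^{(i)}_\ell x_\ell = 0$, and because $\sigma(y^{(i)}) \subseteq S_i \cup \{i\}$ with $y^{(i)}_i \ne 0$, this lets us solve $x_i = -(y^{(i)}_i)^{-1} \sum_{j \in S_i} y^{(i)}_j x_j$; the right-hand side depends only on $\pi_{S_i}(x)$, so $x_i = x'_i$ whenever $\pi_{S_i}(x) = \pi_{S_i}(x')$. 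Hence $S_i$ is a valid recovery set and $\mC$ has locality $r$.

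I do not expect a serious obstacle here: the lemma is essentially a restatement of the definition of locality in terms of dual codewords, and the only non-formal ingredient is the linearity of recovery functions, which is exactly Proposition~\ref{thm:linearfct}. The one point to be slightly careful about is the direction of the inequality in condition (2) versus the bound $\wH(y) \le r+1$ — the ``$+1$'' accounts for the coordinate $i$ itself — and the fact that $S_i$ can be taken to be contained in (not necessarily equal to) the support of the dual codeword, which is harmless since conditions (1)--(3) are monotone under enlarging $S_i$ within its allowed size, and here we only ever shrink or match it.
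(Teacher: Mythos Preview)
Your proof is correct and follows essentially the same approach as the paper's: both directions use Proposition~\ref{thm:linearfct} to pass between a linear recovery function and a dual codeword supported on $S_i \cup \{i\}$. Your converse is in fact more detailed than the paper's, which simply remarks that a dual codeword of the appropriate shape ``induces a recovery function''; your explicit verification of the three properties of Definition~\ref{def:locality} is exactly what that remark is abbreviating.
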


\begin{proof}
Let $S := \{1, \ldots , r\}$ be a recovery set for the coordinate $r+1$ with recovery function $f$, i.e., $f(v_1, \ldots, v_r) = v_{r+1}$ for all $v \in \C$. Since $f$ is linear by Proposition~\ref{thm:linearfct}, for all~$v \in \mC$ we have~$\smash{f(v_1, \ldots, v_r) = \sum_{i=1}^r \lambda_i v_i= v_{r+1}}$, where $\lambda_i \in \F_q$ for all $1 \leq i \leq r$. This equation shows that~$w:=\smash{(\lambda_1, \ldots, \lambda_r, -1, 0 , \ldots , 0)} \in \C^{\perp}$, as desired.
Conversely, it is easy to show that if $w \in \mC^{\perp}$ has the above form then it induces a recovery function for the coordinate~$r+1$.
\end{proof}

Lemma~\ref{local/sup} was first established in  \cite[Lemma 5]{guruswami19}. However, the proof of the result strongly relies on the fact that the recovery functions are always linear, which, to the best of our knowledge, is first explicitly stated in this paper; see 
Proposition~\ref{thm:linearfct}. 

We continue 
by introducing a new set of invariants for an LRC code.
Recall that the
\emph{weight distribution}
of a code
$\mC \le \F_q^n$ is the tuple $(W_0(\mC),\dots,W_n(\mC))$, where $W_i(\mC)$ denotes the number of codewords of weight $i$ in $\mC$.
In this paper, we introduce and investigate the following related
notion of weight distribution, that will enable us to capture the locality of codes through their dual.

\begin{notation}
Let $\mC \le \F_q^n$ be a  non-degenerate code. For $1 \le j \le n$ and $2 \le i \le n$, define $W_i^j(\mC)$ as follows:
\begin{align*}
    W_i^j(\mC) = |\{x \in \mC \mid \wH(x)=i, \, j \in \sigma(x)\}|.
\end{align*}
\end{notation}

Lemma~\ref{local/sup} 
implies the following characterization of the locality of a code.

\begin{proposition}
A code $\mC \le \F_q^n$ has locality $r$ if and only if $ \smash{\sum_{i=2}^{r+1}W_{i}^j(\mC^\perp) > 0}$ for all $1 \le j \le n$.
\end{proposition}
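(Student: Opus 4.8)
The proof is a direct translation between the two characterizations of locality available to us, and essentially no work is needed beyond unwinding definitions. The plan is to apply Lemma~\ref{local/sup} to rewrite the locality condition in terms of supports and weights of dual codewords, and then observe that the quantities $W_i^j(\mC^\perp)$ are precisely counting those codewords.

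First I would fix $j \in \{1,\dots,n\}$ and argue the equivalence coordinate-by-coordinate. By Lemma~\ref{local/sup}, $\mC$ has locality $r$ if and only if for every $j$ there exists $x \in \mC^\perp$ with $j \in \sigma(x)$ and $\wH(x) \le r+1$. Since $\mC$ is non-degenerate (which is needed for the $W_i^j$ notation to make sense, and in any case $\mC$ having locality $r$ forces every coordinate to be covered by some dual codeword of weight at least $2$), the smallest possible weight of a nonzero codeword of $\mC^\perp$ whose support contains $j$ is at least $2$; indeed a weight-$1$ codeword $\alpha e_j \in \mC^\perp$ would force $\mC \le e_j^\perp$, contradicting that coordinate $j$ is covered. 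Hence the existence of such an $x$ is equivalent to the existence of $x \in \mC^\perp$ with $j \in \sigma(x)$ and $2 \le \wH(x) \le r+1$.

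Next I would simply count: the number of such codewords is exactly $\sum_{i=2}^{r+1} W_i^j(\mC^\perp)$, because $W_i^j(\mC^\perp)$ counts codewords of $\mC^\perp$ of weight exactly $i$ whose support contains $j$, and these sets are disjoint for distinct~$i$. Therefore ``there exists $x \in \mC^\perp$ with $j \in \sigma(x)$ and $\wH(x) \le r+1$'' holds if and only if $\sum_{i=2}^{r+1} W_i^j(\mC^\perp) > 0$. Taking the conjunction over all $j \in \{1,\dots,n\}$ yields the claimed equivalence.

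There is no real obstacle here; the only point requiring a moment of care is the reduction from $\wH(x) \le r+1$ to $2 \le \wH(x) \le r+1$, i.e., ruling out weight-$0$ and weight-$1$ dual codewords supported at $j$. The weight-$0$ case is vacuous since $0$ has empty support, and the weight-$1$ case is handled by the non-degeneracy of $\mC$ as above (a weight-one dual codeword at position $j$ would mean no codeword of $\mC$ is nonzero in position $j$). I would state this observation explicitly and then conclude. It may also be worth noting that the statement implicitly assumes $r \le n-1$ (so that recovery sets of size $\le r$ avoiding a coordinate can exist), matching the convention in Definition~\ref{def:locality}.
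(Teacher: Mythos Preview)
Your argument is correct and matches the paper's approach, which simply states that the proposition follows from Lemma~\ref{local/sup} without giving a proof; your unwinding of that lemma together with the explicit handling of the weight-$1$ case is exactly what is needed. One minor remark: in the paper's Notation the quantity $W_i^j(\cdot)$ is defined for a non-degenerate argument, so strictly speaking it is $\mC^\perp$ that should be non-degenerate for the notation to apply, whereas your substantive use of non-degeneracy of $\mC$ (to exclude weight-$1$ dual codewords supported at $j$) is the correct hypothesis for the equivalence itself.
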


A preliminary, small result that will be needed later is the following.

\begin{lemma} \label{lem:helpdual}
Let $\mC \le \F_q^n$ be non-degenerate. For all $1 \le i \le n$ and all $1\le j \le n$ we have
$$W_i(\mC) = \frac{\textstyle\sum_{j=1}^n W_i^j (\mC)}{i}.$$
\end{lemma}
\begin{proof}
We prove the statement by counting the elements in the set $$S=\{(x,j) \mid x \in \mC, \, \wH(x) = i, \, j \in \sigma(x)\}$$ in two ways. Firstly, we have
\begin{align*}
    |S| = \displaystyle\sum_{j=1}^n|\{x \in \mC \mid \wH(x) = i, j \in \sigma(x)\}| = \displaystyle\sum_{j=1}^n W_i^j(\mC).
\end{align*}
On the other hand, 
\begin{align*}
    |S| = \sum_{\substack{x \in \mC \\ \wH(x)=i}}|\{j \mid  j \in \sigma(x)\}| = W_i(\mC) \cdot i,
\end{align*}
concluding the proof.
\end{proof}

For the main result of this section we will need the celebrated MacWilliams identities, which give a closed formula for the weight distribution of a code in terms of the weight distribution of its dual; for a reference see \cite[Theorems~7.1.3 and~7.2.3]{huffman_pless_2003}. 

\begin{theorem}[MacWilliams identities] \label{thm:mw}
Let $\mC \le \F_q^n$ be a code. For all $0 \le i \le n$ we have  
$$W_i(\mC^\perp) = \frac{1}{|\mC|} \sum_{j=0}^n  K_{i,j,q}W_j(\mC),$$
where $$K_{i,j,q} = \sum_{t=0}^i\binom{n-t}{i-t}\binom{n-j}{t}(-1)^{i-t}q^t$$
denotes the \emph{Krawtchouk coefficients}.
\end{theorem}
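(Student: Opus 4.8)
The plan is to prove the (equivalent) generating-function form of the identity by character theory over $\fq^n$, and then recover the stated Krawtchouk form by a coefficient comparison. For indeterminates $X,Y$, introduce the Hamming weight enumerator $W_\mC(X,Y)=\sum_{x\in\mC}X^{n-\wH(x)}Y^{\wH(x)}=\sum_{j=0}^n W_j(\mC)\,X^{n-j}Y^j$. I would first reduce the theorem to the identity
\[
W_{\mC^\perp}(X,Y)=\frac{1}{|\mC|}\,W_\mC\big(X+(q-1)Y,\;X-Y\big),
\]
since equating the coefficient of $X^{n-i}Y^i$ on both sides turns this into $W_i(\mC^\perp)=|\mC|^{-1}\sum_{j=0}^n c_{i,j}\,W_j(\mC)$, where $c_{i,j}$ is the coefficient of $X^{n-i}Y^i$ in $(X+(q-1)Y)^{n-j}(X-Y)^j$; the last step is then to show $c_{i,j}=K_{i,j,q}$.

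\textbf{Poisson summation.} Fix a nontrivial additive character $\psi\colon\fq\to\mathbb{C}^\times$ and, for $x,y\in\fq^n$, set $\langle x,y\rangle=\sum_{\ell=1}^n x_\ell y_\ell$. For $f\colon\fq^n\to\mathbb{C}[X,Y]$ define $\widehat f(y)=\sum_{x\in\fq^n}\psi(\langle x,y\rangle)f(x)$. The key tool is the subgroup Poisson summation formula: for every $\fq$-subspace $\mC\le\fq^n$,
\[
\sum_{y\in\mC^\perp} f(y)=\frac{1}{|\mC|}\sum_{x\in\mC}\widehat f(x).
\]
I would prove this by substituting the definition of $\widehat f$ into the right-hand side, interchanging the two (finite) sums, and invoking orthogonality of characters on the subgroup $\mC$: the inner sum $\sum_{x\in\mC}\psi(\langle x,y\rangle)$ equals $|\mC|$ when $y\in\mC^\perp$ and $0$ otherwise.

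\textbf{Computing the transform.} Apply the formula to $f(x)=X^{n-\wH(x)}Y^{\wH(x)}=\prod_{\ell=1}^n g(x_\ell)$, where $g(0)=X$ and $g(a)=Y$ for $a\ne 0$. Since both $f$ and $(x,y)\mapsto\psi(\langle x,y\rangle)$ factor coordinatewise, $\widehat f(y)=\prod_{\ell=1}^n\widehat g(y_\ell)$ with $\widehat g(b)=\sum_{a\in\fq}\psi(ab)g(a)$. A short computation, again using character orthogonality, gives $\widehat g(0)=X+(q-1)Y$ and $\widehat g(b)=X+Y\sum_{a\ne0}\psi(ab)=X-Y$ for $b\ne0$. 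Hence $\widehat f(y)=(X+(q-1)Y)^{n-\wH(y)}(X-Y)^{\wH(y)}$, and plugging this into the Poisson formula yields precisely the enumerator identity of the first paragraph.

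\textbf{Extracting coefficients.} It remains to identify $c_{i,j}$ with $K_{i,j,q}$. Expanding $(X+(q-1)Y)^{n-j}(X-Y)^j$ by the binomial theorem gives $c_{i,j}=\sum_{a+b=i}\binom{n-j}{a}\binom{j}{b}(q-1)^a(-1)^b$; then writing $(q-1)^a=\sum_s\binom{a}{s}q^s(-1)^{a-s}$, using $\binom{n-j}{a}\binom{a}{s}=\binom{n-j}{s}\binom{n-j-s}{a-s}$, and summing over $a$ by Vandermonde's identity, one arrives at $c_{i,j}=\sum_{t=0}^i\binom{n-t}{i-t}\binom{n-j}{t}(-1)^{i-t}q^t=K_{i,j,q}$. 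Conceptually the heart of the argument is the Poisson-summation step together with the coordinatewise computation of $\widehat f$; the only place demanding genuine care is this last step, where the somewhat non-standard closed form of $K_{i,j,q}$ used in the paper must be matched against the classical binomial expansion — a purely combinatorial manipulation, but the one most prone to bookkeeping errors.
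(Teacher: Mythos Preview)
Your argument is correct and is, in fact, the standard character-theoretic proof of the MacWilliams identity: Poisson summation over the subgroup $\mC$, the coordinatewise factorization of the transform of $X^{n-\wH(x)}Y^{\wH(x)}$, and then a binomial/Vandermonde manipulation to convert the usual Krawtchouk expression $\sum_{a+b=i}\binom{n-j}{a}\binom{j}{b}(q-1)^a(-1)^b$ into the form $\sum_{t}\binom{n-t}{i-t}\binom{n-j}{t}(-1)^{i-t}q^t$ used here. That last rewriting is carried out correctly.

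There is, however, nothing to compare against: the paper does not supply a proof of this theorem. It is quoted as a classical result with a reference to \cite[Theorems~7.1.3 and~7.2.3]{huffman_pless_2003} and then used as a black box in the proof of Theorem~\ref{thm:dualitywij}. So your write-up goes beyond what the paper does; it matches the textbook proof cited rather than anything original to this paper.
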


With the aid of Theorem~\ref{thm:mw} we are now able to show that
for a code $\mC \le \F_q^n$ and for all $1\le i,j \le n$, the value of~$W_i^j(\mC)$ is fully determined by the set $\{W_a^b(\mC^\perp) \mid 1 \le a,b \le n\}$.

\begin{theorem} \label{thm:dualitywij}
Let $\mC \le \F_q^n$ be a  non-degenerate code. For all $1 \le i \le n$ and all $1\le j \le n$ we have 
\begin{multline*}
    W_i^j(\mC) =
    \frac{1}{|\mC^\perp|}\Bigg(K_{i,0,q}\left(1-\frac{1}{q}\right)- \frac{K_{i,1,q}}{q}\left(q-1+W_{2}^j(\mC^\perp)\right) + \\
    \sum_{s=2}^n K_{i,s,q} \Bigg[ \displaystyle\sum_{t=1}^n \frac{W_{s}^t (\mC^\perp)}{s}-  \frac{1}{q} \Bigg( (q-1)\left({\displaystyle\sum_{t=1}^n \frac{W_{s-1}^t (\mC^\perp)}{s-1}} -W_{s-1}^j(\mC^\perp) \right) + \\
 \displaystyle\sum_{t=1}^n \frac{W_{s}^t (\mC^\perp)}{s}+   W_{s+1}^j(\mC^\perp) + (q-2) W_s^j(\mC^\perp) \Bigg) \Bigg] \Bigg),
\end{multline*}
where we set $W_{n+1}^j(\mC)=0$ for all $1 \le j \le n$ by convention.
\end{theorem}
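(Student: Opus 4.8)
The plan is to reduce the computation to two applications of the MacWilliams identities (Theorem~\ref{thm:mw}), for codes whose duals admit an explicit description. Fix a coordinate $j$ and set $\mC^{(j)} := \{x \in \mC \mid x_j = 0\} = \mC \cap e_j^\perp \le \F_q^n$, where $e_j$ is the $j$-th standard basis vector and $e_j^\perp$ the hyperplane it is orthogonal to; I keep $\mC^{(j)}$ inside $\F_q^n$ so that it has the same ambient length (hence the same Krawtchouk coefficients) as $\mC$. Since $\mC$ is non-degenerate, $e_j \notin \mC^\perp$, so $\mC$ is not contained in the hyperplane $e_j^\perp$; thus $\dim\mC^{(j)} = \dim\mC - 1$, and splitting the codewords of weight $i$ according to whether $j$ lies in their support gives
$$W_i^j(\mC) = W_i(\mC) - W_i\big(\mC^{(j)}\big)\qquad\text{for all }1 \le i \le n.$$
(If $\dim\mC = 1$ then $\mC^{(j)} = \{0\}$ and everything below still holds with the usual conventions, so this case may be ignored.)

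The key step is to describe the dual and weight distribution of $\mC^{(j)}$. Dualising $\mC^{(j)} = \mC \cap e_j^\perp$ yields $(\mC^{(j)})^\perp = \mC^\perp + \langle e_j\rangle$, and because $e_j \notin \mC^\perp$ the $q$ cosets $\mC^\perp + a e_j$, $a \in \F_q$, are pairwise disjoint; hence $|(\mC^{(j)})^\perp| = q\,|\mC^\perp|$ and $(\mC^{(j)})^\perp = \bigsqcup_{a \in \F_q}(\mC^\perp + ae_j)$. To get the weight distribution, note that the coset $a=0$ contributes $W_s(\mC^\perp)$, and that for $a \ne 0$ and $y \in \mC^\perp$ one has $\wH(y + ae_j) = \wH(y)+1$ if $y_j = 0$, $\wH(y+ae_j) = \wH(y)-1$ if $y_j = -a$, and $\wH(y+ae_j) = \wH(y)$ otherwise. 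Summing the corresponding counts over $a \in \F_q \setminus \{0\}$ (using that $\{y \in \mC^\perp : y_j \ne 0\}$ is partitioned according to the value of $y_j$) produces
$$W_s\big((\mC^{(j)})^\perp\big) = W_s(\mC^\perp) + (q-1)\big(W_{s-1}(\mC^\perp) - W_{s-1}^j(\mC^\perp)\big) + W_{s+1}^j(\mC^\perp) + (q-2)\,W_s^j(\mC^\perp).$$

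Now I would apply Theorem~\ref{thm:mw} to $(\mC^\perp)^\perp = \mC$ and to $((\mC^{(j)})^\perp)^\perp = \mC^{(j)}$, obtaining $|\mC^\perp|\,W_i(\mC) = \sum_{s=0}^n K_{i,s,q}\,W_s(\mC^\perp)$ and $q\,|\mC^\perp|\,W_i(\mC^{(j)}) = \sum_{s=0}^n K_{i,s,q}\,W_s((\mC^{(j)})^\perp)$. Subtracting, using the first display, and inserting the formula above gives
$$|\mC^\perp|\,W_i^j(\mC) = \sum_{s=0}^n K_{i,s,q}\left(W_s(\mC^\perp) - \tfrac1q\,W_s\big((\mC^{(j)})^\perp\big)\right),$$
and it remains to bring each summand into the claimed form. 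For $s \ge 2$ I would substitute $W_s(\mC^\perp) = \tfrac1s\sum_{t=1}^n W_s^t(\mC^\perp)$ and $W_{s-1}(\mC^\perp) = \tfrac1{s-1}\sum_{t=1}^n W_{s-1}^t(\mC^\perp)$ via Lemma~\ref{lem:helpdual} (the problematic indices $s = 0, 1$ never arise at this point), and a short rearrangement turns the $s$-th summand into precisely the bracketed expression in the statement, with $W_{n+1}^j = 0$ absorbing the top term. The terms $s = 0$ and $s = 1$ I would handle by hand, using $W_0(\mC^\perp) = 1$, $W_0^j(\mC^\perp) = 0$, and --- once more by non-degeneracy, as $\mC^\perp$ has no codeword of weight $1$ --- $W_1(\mC^\perp) = W_1^j(\mC^\perp) = 0$; one checks that they collapse exactly to $K_{i,0,q}\big(1 - \tfrac1q\big)$ and $-\tfrac{K_{i,1,q}}{q}\big(q - 1 + W_2^j(\mC^\perp)\big)$, respectively.

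The step I expect to be the main obstacle is the second displayed identity, the weight distribution of $(\mC^{(j)})^\perp$: one has to carefully track how adding $ae_j$ changes the Hamming weight of a codeword of $\mC^\perp$ through the three cases above and re-sum over $a \in \F_q \setminus \{0\}$ without double counting, and it is exactly there --- together with the facts $\dim\mC^{(j)} = \dim\mC - 1$ and the disjointness of the $q$ cosets --- that non-degeneracy of $\mC$ enters. Once this is in place, the remainder is a somewhat long but entirely routine bookkeeping computation to reconcile the expression with the precise shape of the statement.
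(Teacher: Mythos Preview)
Your proof is correct and follows essentially the same route as the paper: the code you call $\mC^{(j)}$ is exactly the paper's $\mC(S_j)$, and both proofs proceed by writing $W_i^j(\mC)=W_i(\mC)-W_i(\mC(S_j))$, identifying $\mC(S_j)^\perp=\langle e_j\rangle\oplus\mC^\perp$ via non-degeneracy, computing its weight distribution through the same four-case analysis of $\wH(y+ae_j)$, and then applying the MacWilliams identities together with Lemma~\ref{lem:helpdual}. The only differences are notational and in the bookkeeping of the low-index terms $s=0,1$.
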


\begin{proof}
Let $S_j=[n] \backslash \{j\}$ for all $j$. First note that
\begin{align*}
    W_i ^j(\mC) &= W_i(\mC)-|\{x \in \mC \mid \wH(x)=i, \, \sigma(x) \subseteq S_j\}| \\
    &=W_i(\mC)-W_i(\mC(S_j)) \\
    &=\frac{1}{|\mC^\perp|} \sum_{s=0}^n K_{i,s,q} W_s(\mC^\perp) -  \frac{1}{|\mC(S_j)^\perp|} \sum_{s=0}^n K_{i,s,q} W_s(\mC(S_j)^\perp),
\end{align*}
where the latter equality comes from the MacWilliams identities (Theorem~\ref{thm:mw}).
We will look at $W_s(\mC(S_j)^\perp)$ more in detail, starting with the observation that, since $\mC^\perp$ is non-degenerate, we have $\mC(S_j)^\perp = \langle e_j \rangle \oplus \mC^\perp$, where $e_j$ denotes the $j$th unit vector in $\F_q^n$.
Therefore, $W_s(\mC(S_j)^\perp) = W_s(\langle e_j \rangle \oplus \mC^\perp)$. Every element in $\langle e_j \rangle \oplus \mC^\perp$ can be written uniquely as $\alpha e_j + x$ where $\alpha \in \F_q$ and $x \in \mC^\perp$. Note that for $\alpha \in \F_q$ and $x \in \mC^\perp$ we have
\begin{align*}
\wH(\alpha e_j + x)  =    
\begin{cases}
\wH(x)+1 \quad &\textnormal{ if $\alpha \ne 0, \, x \in \mC^\perp(S_j)$,} \\
\wH(x) \quad &\textnormal{ if $\alpha = 0$,} \\
\wH(x)-1 \quad &\textnormal{ if $\alpha \ne 0, \, x_j=-\alpha$,} \\
\wH(x) \quad &\textnormal{ if $\alpha \ne 0, \, x_j \notin \{-\alpha,0\}$.} \\
\end{cases}
\end{align*}
By convention, we set $W_{n+1}^t=0$ for all $1 \le t \le n$. Then, for $1 \le s \le n-1$, straightforward computations (together with Lemma~\ref{lem:helpdual})
 give that
\begin{align*}
    W_s(\mC(S_j)^\perp) = W_s(\langle e_j \rangle \oplus \mC^\perp) = 
 A_s^1 + A_s^2 + A_s^3 +A_s^4,
\end{align*}
where
\allowdisplaybreaks
\begin{align*}
    A_s^1 &= |\{(\alpha,x) \mid \alpha \in \F_q^*, \, x \in \mC^\perp(S_j), \, \wH(x)=s-1\}| \\
    &= |\F_q^*|\cdot|W_{s-1}(\mC^\perp(S_j))| = (q-1) \left(W_{s-1}(\mC^\perp)-W_{s-1}^j(\mC^\perp)\right) \\
    &= \begin{cases}
    (q-1) \quad \textnormal{if $s=1$,} \\
    (q-1)\left({\displaystyle\sum_{t=1}^n W_{s-1}^t (\mC^\perp)}/{(s-1)} -W_{s-1}^j(\mC^\perp)\right) \textnormal{else,}
    \end{cases} \\
    A_s^2 &= |\{(\alpha,x) \mid \alpha=0, x \in \mC^\perp, \wH(x)=s\}| = W_s(\mC^\perp) = {\sum_{t=1}^n W_{s}^t (\mC^\perp)}/{s},\\
    A_s^3 &= |\{(\alpha,x) \mid \alpha \in \F_q^*, \, x \in \mC^\perp,  x_j = -\alpha,   \wH(x)=s+1\}| = W_{s+1}^j(\mC^\perp), \\
    A_s^4 &= |\{(\alpha,x) \mid \alpha \in \F_q^*, \, x \in \mC^\perp, x_j \ne -\alpha,  \wH(x)=s\}| = (q-2) W_s^j(\mC^\perp).
 \end{align*}
It is easy to see that  $W_n(\langle e_j \rangle \oplus \mC^\perp) = A_n^1+A_n^2+A_n^4$. Therefore we have derived a closed formula for $W_s(\mC(S_j)^\perp)$ for all $1 \le s \le n$. We finally obtain
\begin{multline*}
    W_i ^j(\mC) = \frac{1}{|\mC^\perp|}\left(K_{i,0,q} + \sum_{s=1}^n  K_{i,s,q} A_s^2\right)- \\ \frac{1}{|\mC(S_j)^\perp|} \left(K_{i,0,q}+ \sum_{s=1}^n  K_{i,s,q} \Bigg( A_s^1 + A_s^2 + A_s^3 +A_s^4\Bigg)\right),
\end{multline*}
from which the formula of the theorem follows (using also that $|\mC(S_j)^\perp| = q^{n-k+1}$, which holds because $\mC$ is non-degenerate by assumption).
\end{proof}

Theorem~\ref{thm:dualitywij} may be 
regarded as a MacWilliams-type identity
for the $W_i^j(\mC)$'s, even though these numbers do not naturally fit into the framework of the MacWilliams identities (they do not represent the cardinalities of the blocks of a partition of the underlying code).
One may say that, as invariants, the classical weight distribution is to the minimum distance what the $W_i^j(\mC)$'s are to the minimum distance \emph{and} locality.

\begin{remark} \label{rem:weightdistrans}
For a non-degenerate code $\mC \le \F_q^n$ with $d(\mC) \ge 2$ and an integer $1 \le j \le n$, consider a codeword $x \in \mC^\perp$ with $x_j \ne 0$. Then the tuple $$\left(\frac{W_2^j(\mC^{\perp})}{q-1},\frac{W_3^j(\mC^{\perp})}{q-1},\dots,\frac{W_n^j(\mC^{\perp})}{q-1}\right)$$ is the weight distribution of the (non-linear) code $\mC^\perp(S_j)+x$, where $S_j=[n] \setminus \{j\}$. Therefore Theorem~\ref{thm:dualitywij} shows that the weight distribution of $\mC^\perp(S_j)+x$ is fully determined by the weight distribution of the translates $\{\mC(S_i)+x^i \mid i \in [n]\}$, where for all $i \in [n]$, $x^i$ is any vector in $\mC$ with $x_i^i \ne 0$.
\end{remark}

\section{Applications} \label{sec:appl}

In this section we investigate two concrete applications of the duality theory of LRC codes. More precisely, we 
use duality arguments to
establish two bounds for the parameters of LRC codes, which often improve on the previously known best bounds.

\subsection{A Dual Distance Bound}

We start with a result that relates the parameters of an LRC code to its dual distance. 
The following result holds true for any linear code.

\begin{proposition}\label{prop:weightbound}
Let $\mC \le \F_q^n$ be a code of dimension $k \ge 2$ and minimum distance $d$. Let $i \in \{1, \ldots, n\}$. If there exists $x \in \mC$ such that $i \in \sigma(x)$, then there exists $y \in \mC$ such that $i \in \sigma(y)$ and 
$$\omega^\HH(y) \leq n- k +1 - \frac{(d-q)}{q}.$$ 
\end{proposition}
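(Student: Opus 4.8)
The idea is to average the Hamming weight of codewords having a nonzero $i$-th coordinate and then pick one whose weight is at most the average. Recall that with the notation $W_\ell^i(\mC) = |\{x \in \mC \mid \wH(x) = \ell,\ i \in \sigma(x)\}|$, the number of codewords $x \in \mC$ with $i \in \sigma(x)$ equals $\sum_{\ell} W_\ell^i(\mC)$. Since $\mC$ has dimension $k$ and the hyperplane $\{x \in \mC \mid x_i = 0\}$ has dimension $k-1$ (the projection to coordinate $i$ is surjective by hypothesis), there are exactly $q^k - q^{k-1} = q^{k-1}(q-1)$ codewords with $i \in \sigma(x)$. Let $\avg_i(\mC)$ denote the average weight of such codewords, i.e.\ $\avg_i(\mC) = \frac{1}{q^{k-1}(q-1)}\sum_\ell \ell\, W_\ell^i(\mC)$. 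It then suffices to show $\avg_i(\mC) \le n - k + 1 - (d-q)/q$, because some codeword $y$ with $i \in \sigma(y)$ achieves weight $\le \avg_i(\mC)$.

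To estimate this average I would compute $\sum_{x \in \mC,\, i \in \sigma(x)} \wH(x)$ by swapping the order of summation: $\sum_{x \in \mC} \sum_{j \in \sigma(x)} [\![ i \in \sigma(x)]\!] = \sum_{j=1}^n |\{x \in \mC \mid j \in \sigma(x),\ i \in \sigma(x)\}|$. For $j = i$ this count is $q^{k-1}(q-1)$. For $j \ne i$, the set $\{x \in \mC \mid x_i = 0\} \cup \{x \in \mC \mid x_j = 0\}$ is a union of two subspaces each of dimension $\ge k-1$, and their intersection $\{x_i = x_j = 0\}$ has dimension $\ge k-2$; by inclusion–exclusion the number of $x$ with $x_i \ne 0$ or $x_j \ne 0$ is at most $q^k - q^{k-2}$, so the number with $x_i \ne 0$ \emph{and} $x_j \ne 0$ is at least $q^{k-1}(q-1) - (q^k - q^{k-1})\cdot(\text{something})$; more cleanly, it equals $|\mC| - |\{x_i = 0\}| - |\{x_j=0\}| + |\{x_i = x_j = 0\}| \le q^k - 2q^{k-1} + q^{k-1} = q^k - q^{k-1}$ when the projection to the pair $(i,j)$ is not surjective, and in any case is bounded by $q^k - q^{k-1}$ but I actually want an \emph{upper} bound on this count to upper-bound the average. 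The right bound is: $|\{x \in \mC \mid x_i \ne 0,\ x_j \ne 0\}| \le q^{k-1}(q-1) - (\text{number with } x_i \ne 0,\ x_j = 0)$, and the number with $x_i \ne 0$ and $x_j = 0$ is at least $(q-1)\cdot q^{k-2}$ when coordinate $j$ is not identically zero — but more to the point it is $\ge (q-1) q^{k-2}$ provided the column $j$ restricted to $\{x_i \ne 0\}$ is nonconstant, and when $d \ge 2$ no coordinate of $\mC$ is identically zero. Using the minimum distance: every nonzero codeword has weight $\ge d$, so for a fixed $x$ with $i \in \sigma(x)$ we have $\wH(x) \ge d$, hence $\wH(x)$ restricted to coordinates $\ne i$ is $\ge d-1$; but also $\wH(x) \le n$. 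The cleanest route is to combine the two counts: $\sum_{x,\, i \in \sigma(x)} \wH(x) = q^{k-1}(q-1) + \sum_{j \ne i} |\{x \in \mC \mid x_i \ne 0,\ x_j \ne 0\}|$, and bound each inner term by observing that $|\{x \mid x_i \ne 0,\ x_j = 0\}| \ge q^{k-2}(q-1)$ (using that $\pi_{\{i,j\}}(\mC)$ has dimension $\ge 1$, which holds since $d \ge 2$ forces coordinate $j$ to not vanish on all of $\mC$, and combined with $x_i$ being surjective makes $\pi_{\{i,j\}}(\mC)$ at least $1$-dimensional; when it is $2$-dimensional the count with $x_i \ne 0, x_j = 0$ is exactly $q^{k-2}(q-1)$). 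Therefore $|\{x_i \ne 0,\ x_j \ne 0\}| \le q^{k-1}(q-1) - q^{k-2}(q-1) = q^{k-2}(q-1)^2$, giving $\sum_{x,\, i \in \sigma(x)} \wH(x) \le q^{k-1}(q-1) + (n-1) q^{k-2}(q-1)^2$, so $\avg_i(\mC) \le 1 + (n-1)(q-1)/q = n - (n-1)/q$. This is not yet the claimed bound; to bring in $k$ and $d$ I would instead incorporate the minimum distance by noting that the \emph{smallest} possible weight among the $q^{k-1}(q-1)$ codewords is $\ge d$, and refine the averaging: writing $N = q^{k-1}(q-1)$, the sum of weights is $\ge d + (N-1)\cdot(\text{something})$ is the wrong direction, so instead I average over a punctured/shortened code.

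The hard part — and the step I expect to be the real obstacle — is getting the right dependence on $k$ and $d$ simultaneously. I believe the intended argument shortens $\mC$ at a maximal-weight pattern: take $x \in \mC$ with $i \in \sigma(x)$ of maximal weight $w$, and consider $\mC$ shortened on $n - w$ of the zero-coordinates of $x$, which produces a code of length $w$, dimension $\ge k - (n-w)$, still containing (a shortening of) $x$ with nonzero $i$-coordinate, and minimum distance $\ge d$; applying the Singleton bound $k - (n-w) \le w - d + 1$ gives $w \ge$ a bound involving $n - k + 1$, but that is the wrong inequality direction for an upper bound on $\omega^\HH(y)$. So the correct move is the dual one: I would average $\wH(x)$ over all $x \in \mC$ with $x_i \ne 0$, express $\sum \wH(x)$ via the column supports as above to get $\avg_i(\mC) = \frac{1}{N}\sum_{j=1}^n |\{x : x_i \ne 0, x_j \ne 0\}|$, bound the $j = i$ term by $N$ and each $j \ne i$ term by using that $|\{x : x_i \ne 0, x_j = 0\}| \ge \frac{q-1}{q}\cdot\frac{N}{\,|\text{?}|}$, and crucially use the minimum distance to say that at most $n - d$ of the coordinates $j$ can be "dependent" on $i$ — i.e.\ use that the dual-distance-type argument forces $\pi_{\{i,j\}}(\mC)$ to be $2$-dimensional for at least $n - 1 - (k - ?)$ values of $j$. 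Concretely, I would invoke that a code of minimum distance $d$ has the property that every set of $n - d + 1$ coordinates contains an information set, so for at most $k - 1$ indices $j$ can the column $j$ be in the span of column $i$; for the remaining $\ge n - k$ indices the count $|\{x : x_i \ne 0, x_j \ne 0\}|$ equals exactly $q^{k-2}(q-1)^2 = N(q-1)/q$, while for the $\le k - 1$ "dependent" indices it is at most $N$. This yields $\sum_j |\{\cdots\}| \le N + (k-1)N + (n-k)\cdot N(q-1)/q$, so $\avg_i(\mC) \le k + (n-k)(q-1)/q = n - (n-k)/q$. Comparing with the target $n - k + 1 - (d-q)/q = n - k + 1 - d/q + 1 = n - k + 2 - d/q$: we need $n - (n-k)/q \le n - k + 2 - d/q$, i.e.\ $(d - (n-k))/q \le 2 - k$, which fails for large $k$ — so my coordinate-counting bound must be sharpened further, and the sharpening is exactly where the minimum distance $d$ (not just $d \ge 2$) must be used more carefully, replacing the crude "at most $k-1$ dependent columns" by a weight-based refinement. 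I would therefore close the argument by the shortening trick applied in the \emph{averaged} form: restrict to an information set $I$ of size $k$ containing $i$ (possible since $i$ is in some support and $d \le n-k+1$), note every codeword is determined by its restriction to $I$, and average over the $q^{k-1}(q-1)$ choices with the $i$-th coordinate nonzero — for a uniformly random such codeword, the expected number of nonzero coordinates outside $I$ is at most $(n-k)(q-1)/q$ while inside $I$ it is at most $k$ but can be improved to $k - \E[\#\text{zeros in } I] \le k - (k-1)/q$ (each of the $k-1$ free coordinates in $I\setminus\{i\}$ is zero with probability $1/q$), giving total expected weight $\le k - (k-1)/q + (n-k)(q-1)/q = n - k + 1 + (k - n)/q + \cdots$ — at which point matching the stated bound $n - k + 1 - (d-q)/q$ amounts to showing the tail is $\le -(d-q)/q$, which should follow because every codeword in the average has weight $\ge d$, improving the lower tail of the weight distribution and hence... and this delicate interplay between the average and the minimum-distance lower bound on each term is the crux I would spend the most care on.
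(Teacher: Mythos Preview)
Your averaging strategy cannot close, and the difficulty you identify at the end is not a refinement issue but a structural one: the minimum weight among codewords with $x_i\neq 0$ can be dramatically smaller than the average, so bounding the average from above by the target is hopeless. Take $\mC=\F_q^n$ (so $k=n$, $d=1$): the target bound is $n-k+1-(d-q)/q=1+(q-1)/q<2$, and indeed $e_i$ has weight $1$; but the average weight of codewords with $x_i\neq 0$ is $1+(n-1)(q-1)/q$, which is close to $n$. Your refined estimate $\avg_i(\mC)\le n-(n-k)/q$ is essentially sharp here and is nowhere near the target. More generally, for an MDS code the target bound forces the existence of a weight-$d$ codeword through coordinate $i$, while the average weight sits well above $d$. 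No amount of bookkeeping with information sets or column dependencies will make the average drop to the minimum.

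The paper's argument is entirely different and works directly with the minimum. Let $m$ be the smallest weight of a codeword $x\in\mC$ with $i\in\sigma(x)$; after relabelling, $x=(1,\dots,1,0,\dots,0)$ with $m$ ones. Impose $k-1$ linear conditions to find a nonzero $y\in\mC$ with $y_1=0$ and $y_j=0$ for the last $k-2$ coordinates; then $y$ is supported on $\{2,\dots,n-k+2\}$, and since $\wH(y)\ge d$ its restriction to $S=\{1,\dots,m\}$ has weight at least $d-(n-m-k+2)$. By pigeonhole some $\alpha\in\F_q^*$ occurs in at least a $1/(q-1)$ fraction of the nonzero entries of $\pi_S(y)$, and then $x'=\alpha x-y$ has $x'_1=\alpha\neq 0$ and weight at most $n-k+2-\frac{d-n+m+k-2}{q-1}$. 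Minimality of $m$ forces $m$ to be at most this quantity, and solving the resulting inequality for $m$ gives $m\le n-k+1-(d-q)/q$. The key idea you are missing is this cancellation step: build a second codeword and subtract a scalar multiple chosen to annihilate many coordinates on the support of $x$, then invoke minimality.
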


\begin{proof}
Assume $i=1$ without loss of generality. Fix $x \in \mC$ of minimum weight, say $m$, such that $1 \in \sigma(x)$. Without further loss of generality, we can assume $x=(1, \ldots,1,0, \ldots, 0)$. 
We have $m \le n-k+1$.  Since~$\mC$ has dimension $k$, there exists a non-zero $y \in \mC$ with $y_j=0$ for $j \in \{1\} \cup \{n, n-1, \ldots, n-k+3\}$. Let $S=\{1, \ldots, m\}$. Since~$\mC$ has minimum distance $d$,
$\pi_S(y)$ has weight at least $d-(n-m)+k-2=d-n+m+k-2$. Fix $\alpha \in \F_q^*$ with
\begin{align*}
    \{j \mid \pi_S(y)_j = \alpha\} \ge \frac{\wH(\pi_S(y))}{(q-1)} \ge \frac{(d-n+m+k-2)}{(q-1)}.
\end{align*}
Then $x':=\alpha x - y$ has $x'_1 \neq 0$ and weight at most
\begin{multline*}
    m-\frac{d-n+m+k-2}{q-1} + (n-m)-(k-2) =
    n-k+2 - \frac{d-n+m+k-2}{q-1}.
\end{multline*}
Therefore, by the choice of $m$, it must be that
$$m \le n-k+2 - \frac{d-n+m+k-2}{q-1},$$
which is the same as
$m \le n-k+1 - (d-q)/q$.
This shows the desired result.
\end{proof}

Note for linear codes with $d>q$,
Proposition~\ref{prop:weightbound} improves on the 
Singleton Bound.
In particular, it shows that an MDS code of dimension $k \ge 2$ always has $d \ge q$.
This fact is often shown by imposing that the number of codewords of weight $d+1$ in an MDS code is non-negative; see e.g.~\cite[Corollary~7.4.3]{huffman_pless_2003}.

By applying Proposition~\ref{prop:weightbound} to the dual $\mC^{\perp}$ of an LRC code with locality $r$, and by using Theorem~\ref{thm:single}, we get the following bound. To our best knowledge, this is the first bound for LRC codes that takes into account the six parameters $q$, $n$, $k$, $d$, $d^\perp$, $r$.

\begin{corollary}\label{cor:locdualdist}
Let $\mC \le \F_q^n$ an LRC code of dimension $2 \leq k \leq n-2$, minimum distance $d$, and locality $r$. Moreover, \\let~$d^\perp :=\dH(\mC^\perp)$. We have
\begin{align*}
    d^\perp +\frac{d^\perp-q}{q} \le n-d+3-\left\lceil \frac{k}{r}\right\rceil.
\end{align*}
\end{corollary}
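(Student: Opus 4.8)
The plan is to apply Proposition~\ref{prop:weightbound} to the dual code and then combine it with the locality characterization from Lemma~\ref{local/sup} together with the Generalized Singleton Bound. Since $\mC$ has locality $r$, Lemma~\ref{local/sup} tells us that for every coordinate $i$ there is a codeword $x \in \mC^\perp$ with $i \in \sigma(x)$ and $\wH(x) \le r+1$; in particular $\mC^\perp$ is non-degenerate (there is such a codeword for every $i$, so every coordinate is in the support of some dual codeword). First I would record the parameters of $\mC^\perp$: it has length $n$, dimension $n-k$, and minimum distance $d^\perp$. The hypothesis $2 \le k \le n-2$ guarantees that $\mC^\perp$ has dimension $n - k \ge 2$, which is exactly what is needed to invoke Proposition~\ref{prop:weightbound}.

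Next I would apply Proposition~\ref{prop:weightbound} to $\mC^\perp$: for each coordinate $i$, since some dual codeword has $i$ in its support, there exists $y \in \mC^\perp$ with $i \in \sigma(y)$ and
\begin{equation*}
\wH(y) \le n - (n-k) + 1 - \frac{d^\perp - q}{q} = k + 1 - \frac{d^\perp - q}{q}.
\end{equation*}
Now here is the key combinatorial point: Lemma~\ref{local/sup} says that $\mC$ has locality $r'$ whenever, for every coordinate $i$, there is a dual codeword of weight at most $r'+1$ with $i$ in its support. The codewords $y$ just produced have weight at most $k + 1 - (d^\perp - q)/q$, so $\mC$ has locality $r' := \big\lceil k + 1 - (d^\perp - q)/q \big\rceil - 1$ (taking a ceiling because locality is an integer and $\wH(y)$ is an integer bounded by this quantity). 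Since $\mC$ also has locality $r$ by hypothesis, and having locality $r$ implies having locality $r''$ for any $r'' \ge r$, the relevant locality for applying the Singleton-type bound is the \emph{smaller} of the two — but actually we only need that $\mC$ has locality $r'$, so we may freely use $r'$ in the Generalized Singleton Bound.

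The final step is to feed $r'$ into Theorem~\ref{thm:single}, which gives $k + \lceil k/r' \rceil \le n - d + 2$, i.e. $\lceil k/r' \rceil \le n - d + 2 - k$. One then unwinds the definition of $r'$: from $\lceil k/r' \rceil \le n-d+2-k$ and $r' \le k - \frac{d^\perp - q}{q}$ (dropping the ceiling on $r'$ in the direction that weakens the bound, or rather keeping track of it carefully), one derives the claimed inequality
\begin{equation*}
d^\perp + \frac{d^\perp - q}{q} \le n - d + 3 - \left\lceil \frac{k}{r} \right\rceil.
\end{equation*}
The main obstacle — and the step requiring the most care — is the bookkeeping with ceilings: one must be careful about whether $r'$ should be $\lceil k+1 - (d^\perp-q)/q\rceil - 1$ or $\lfloor k - (d^\perp-q)/q \rfloor$, and then how $\lceil k/r'\rceil$ compares to $\lceil k/r \rceil$ given that $r' \le r$ need not hold in general. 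I suspect the intended argument actually uses that $\mC$ has locality $r$ to get $\lceil k/r \rceil$ on the right-hand side via Theorem~\ref{thm:single} directly, and uses Proposition~\ref{prop:weightbound} applied to $\mC^\perp$ only to bound $d^\perp$ (via the minimum-weight codeword covering each coordinate), so the two ingredients are combined as: $d^\perp + (d^\perp - q)/q \le \wH(y) + \text{(slack)} $ where the slack comes from $n - d + 2 - k \ge \lceil k/r \rceil$. Pinning down that this is an equality-of-inequalities manipulation rather than needing the two localities to interact is the crux; once the correct chain is identified, the rest is routine algebra.
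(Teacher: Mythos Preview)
Your detour through a ``new locality'' $r'$ does not lead to the stated bound: plugging $r'$ into Theorem~\ref{thm:single} yields $k + \lceil k/r' \rceil \le n-d+2$, whereas the corollary has $\lceil k/r \rceil$, and there is no reason to expect $r' \le r$ (indeed $r'$ will typically be at least $r$, so the Singleton bound with $r'$ is \emph{weaker} than with $r$). The ceiling bookkeeping you worry about is a symptom of this wrong turn rather than a genuine obstacle to be overcome.

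The actual argument, which you nearly reach in your last paragraph but never state cleanly, uses the two ingredients in series rather than intertwined. Apply Proposition~\ref{prop:weightbound} to $\mC^\perp$ (dimension $n-k \ge 2$, minimum distance $d^\perp$): pick any nonzero dual codeword and any coordinate $i$ in its support, and obtain $y \in \mC^\perp$, $y \ne 0$, with
\[
\wH(y) \;\le\; n-(n-k)+1-\frac{d^\perp-q}{q} \;=\; k+1 - \frac{d^\perp-q}{q}.
\]
The step you are missing is then the trivial one: since $y \ne 0$ lies in $\mC^\perp$, we have $d^\perp \le \wH(y)$, and rearranging gives
\[
d^\perp + \frac{d^\perp-q}{q} \;\le\; k+1.
\]
Independently, Theorem~\ref{thm:single} with the \emph{given} locality $r$ yields $k + \lceil k/r \rceil \le n-d+2$, i.e.\ $k+1 \le n-d+3-\lceil k/r\rceil$. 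Chaining the two inequalities finishes the proof. There is no need to introduce $r'$ or to reinterpret the output of Proposition~\ref{prop:weightbound} as a locality statement, and no ceilings need to be tracked.
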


\subsection{An LP Bound for LRC Codes}

Delsarte’s linear programming bound is 
a 
renownedly powerful
tool to estimate the size of codes of a certain length and minimum distance.
It mainly relies on the Macwilliams identities (Theorem~\ref{thm:mw}). In this section we study a new LP bound, with focus on LRC codes.
We use our main duality result,
Theorem~\ref{thm:dualitywij},
to construct a linear program.
For ease of exposition, we introduce the following notation.

\begin{notation} \label{not:perp}
Let $A=\{a_{ij} \mid 1\le i \le n+1, \, 1\le j \le n\}$ be non-zero integers. For $1 \le i,j \le n$ we denote by $a_{ij}^\perp$ the following linear combination of elements in $A$:
\begin{multline*}
    a_{ij}^\perp =
    K_{i,0,q}\left(1-\frac{1}{q}\right)- \frac{K_{i,1,q}}{q}\left(q-1+a_{2j}\right)+ \\ \sum_{s=2}^n K_{i,s,q} \Bigg[ \displaystyle\sum_{t=1}^n \frac{a_{st}}{s}- \frac{1}{q} \Bigg( (q-1)\left({\displaystyle\sum_{t=1}^n \frac{a_{s-1,t}}{s-1}} -a_{s-1,j} \right) +\\
 \displaystyle\sum_{t=1}^n \frac{a_{st}}{s}+   a_{s+1,j} + (q-2) a_{sj} \Bigg) \Bigg].
\end{multline*}
\end{notation}

With Notation~\ref{not:perp} in mind, we can set up the following linear program that will give bounds for codes with locality. 

\begin{theorem}[LP bound for LRC codes] \label{thm:lpbound}
Let $\mC \le \F_q^n$ be a non-degenerate LRC code of minimum distance at least $d$, dimension $k$, and locality $r$. 
Let $\mu^*$ denote the minimum
value of 
$$\sum_{i=1}^{n+1}\left({\displaystyle\sum_{j=1}^{n} a_{ij}}/{i}\right),$$
where $a_{ij} \in \R$, for $1 \le i \le n+1$ and $1 \le j \le n$,
satisfy the following constraints:
\begin{itemize}
    \vspace{0.15cm}\item[(i)]  $a_{ij}\ge 0$ for $1 \le i,j \le n$, 
    \vspace{0.15cm}\item[(ii)]  $a_{ij}^\perp\ge 0$ for $1 \le i,j \le n$, 
    \vspace{0.15cm}\item[(iii)] $a_{ij}^\perp =0$ for $1 \le i \le d-1$ and $1 \le j \le n$,   
    \vspace{0.15cm}\item[(iv)]  $\smash{\textstyle\sum_{i=1}^{r+1}a_{ij} \ge q-1}$ for $1 \le j \le n$,
    \vspace{0.15cm}\item[(v)] $a_{1j} =0$ for $1 \le j \le n$, 
    \vspace{0.15cm}\item[(vi)]  $a_{n+1,j}=0$ for $1 \le j \le n$,
\end{itemize}
Then
$$k \le n- \lceil \log_q(1+\mu^*) \rceil.$$
\end{theorem}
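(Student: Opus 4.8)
The plan is to show that the linear program in the statement is a \emph{relaxation} of the natural integer program whose feasible points are the arrays $(W_i^j(\mC^\perp))$ coming from genuine LRC codes. Concretely, I would start from a non-degenerate LRC code $\mC \le \F_q^n$ with locality $r$, dimension $k$, and minimum distance at least $d$, and set $a_{ij} := W_i^j(\mC^\perp)$ for $1 \le i \le n$, $1 \le j \le n$, together with $a_{n+1,j} := 0$. The first task is to verify that this array satisfies all six constraints. Constraints (i), (v), (vi) are immediate: the $W_i^j$ are non-negative integers, there are no codewords of weight $1$ in $\mC^\perp$ only if $d^\perp(\mC^\perp)\ge 2$... actually $W_1^j(\mC^\perp)$ counts weight-one codewords of $\mC^\perp$ through coordinate $j$, and since $\mC$ is non-degenerate, $\mC^\perp$ contains no weight-one word, so $a_{1j}=W_1^j(\mC^\perp)=0$; and $a_{n+1,j}=0$ by our choice. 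Constraint (ii) holds because, by Theorem~\ref{thm:dualitywij}, $a_{ij}^\perp$ equals $|\mC^\perp|\cdot W_i^j(\mC)/1$ — more precisely, comparing Notation~\ref{not:perp} with the formula in Theorem~\ref{thm:dualitywij}, we get $a_{ij}^\perp = |\mC^\perp| \, W_i^j(\mC) \ge 0$. Constraint (iii) then follows because $W_i^j(\mC)=0$ whenever $i \le d-1 < d = \dH(\mC)$ (no nonzero codeword of $\mC$ has weight less than $d$), so $a_{ij}^\perp = 0$ there. Constraint (iv) is exactly the locality characterization: by Lemma~\ref{local/sup} (or the Proposition following the definition of $W_i^j$), $\mC$ has locality $r$ iff $\sum_{i=2}^{r+1} W_i^j(\mC^\perp) > 0$ for every $j$; and since the nonzero codewords of $\mC^\perp$ of fixed support size through coordinate $j$ come in groups of $q-1$ scalar multiples, $\sum_{i=2}^{r+1} W_i^j(\mC^\perp)$ is a positive multiple of $q-1$, hence $\ge q-1$. (Here $a_{1j}=0$ lets me extend the sum to start at $i=1$.)

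Next I would compute the objective value at this feasible point. By Lemma~\ref{lem:helpdual}, $\sum_{j=1}^n W_i^j(\mC^\perp)/i = W_i(\mC^\perp)$, so
\begin{align*}
\sum_{i=1}^{n+1} \left( \sum_{j=1}^n a_{ij}/i \right) = \sum_{i=1}^{n} W_i(\mC^\perp) = |\mC^\perp| - 1 = q^{n-k} - 1,
\end{align*}
using that $W_0(\mC^\perp)=1$ and the contribution of the $i=n+1$ row is $0$. Since $\mu^*$ is the \emph{minimum} of the objective over the whole (larger) feasible region of the LP, and our array is feasible, we obtain $\mu^* \le q^{n-k}-1$, i.e. $q^{n-k} \ge 1 + \mu^*$. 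Taking logarithms base $q$ gives $n-k \ge \log_q(1+\mu^*)$, and since $n-k$ is an integer, $n-k \ge \lceil \log_q(1+\mu^*)\rceil$, which rearranges to the claimed bound $k \le n - \lceil \log_q(1+\mu^*)\rceil$.

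The step I expect to require the most care is the precise matching between Notation~\ref{not:perp} and Theorem~\ref{thm:dualitywij} — in particular pinning down the exact constant $|\mC^\perp|$ (as opposed to $|\mC(S_j)^\perp|=q^{n-k+1}$, which appears in the proof of the theorem) and confirming that the substitution $a_{ij}\mapsto W_i^j(\mC^\perp)$ turns $a_{ij}^\perp$ into exactly $|\mC^\perp|\,W_i^j(\mC)$ with no stray factors; one must be attentive to the conventions $W_1^j=0$ and $W_{n+1}^j=0$ that are baked into both statements, and to the fact that the $s=2$ term in Notation~\ref{not:perp} involves $a_{1j}$ implicitly through the $a_{s-1,j}$ slot, which is why constraint (v) is needed for the correspondence to be exact. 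A secondary subtlety is justifying the ``groups of $q-1$'' divisibility in constraint (iv): this is because scaling a codeword of $\mC^\perp$ by a nonzero field element preserves its weight and its support, so $W_i^j(\mC^\perp)$ is divisible by $q-1$ for every $i,j$, hence any nonempty partial sum that is positive is $\ge q-1$. Everything else is bookkeeping.
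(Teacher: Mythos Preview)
Your proposal is correct and follows exactly the paper's approach: set $a_{ij}=W_i^j(\mC^\perp)$, verify feasibility of the six constraints (using Theorem~\ref{thm:dualitywij} to identify $a_{ij}^\perp=|\mC^\perp|\,W_i^j(\mC)$ for (ii)--(iii), Lemma~\ref{local/sup} plus the scalar-multiple divisibility for (iv), and non-degeneracy for (v)), then evaluate the objective via Lemma~\ref{lem:helpdual} as $|\mC^\perp|-1=q^{n-k}-1$ and conclude. Your write-up is in fact more detailed than the paper's own proof, and your flagged subtleties (the $|\mC^\perp|$ normalization and the $q-1$ divisibility for constraint~(iv)) are the right places to be careful; both resolve exactly as you indicate.
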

\begin{proof}
We claim that for any non-degenerate linear code $\mC \le \F_q^n$ of minimum distance $d$ and locality $r$, the assignment $\smash{a_{ij} = W_i^j(\mC^\perp)}$ is a feasible solution of the linear program. Indeed, (i) and~(ii) are satisfied trivially. Constraint~(iii) guarantees that the minimum distance of the code is at least~$d$ and constraint~(iv) guarantees that the code has locality~$r$. Finally constraint~(v) makes sure that $\mC$ is non-degenerate and constraint~(vi) is needed by convention (in order to be able to apply Theorem~\ref{thm:dualitywij}). Therefore the optimum of the linear program gives a lower bound on the size of $\mC^\perp$ by Lemma~\ref{lem:helpdual}.
\end{proof}

\begin{remark}
    A different LP-type bound was established in~\cite{hu16}. However, the approach
    of~\cite{hu16} only applies to codes whose recovery sets are disjoint and of equal size, requiring $r+1$ to divide $n$ (see~\cite[Section IV]{hu16}).
    Those assumptions drastically reduce the parameters for which the bound can be applied and the bound of~\cite{hu16} is therefore not comparable with our LP bound.
\end{remark}

\begin{table}[h!]
\caption{$q=2$}
    \centering
    \renewcommand\arraystretch{1.2}
\begin{tabular}{|c|c|c|c|c|c|c|} 
 \hline
$n$ & $d$ & $r$ & LP & SH with LP  & SH exact & gen. Singl.
  \\\noalign{\global\arrayrulewidth 1.8pt}
    \hline
    \noalign{\global\arrayrulewidth0.4pt}
10 & 4 & 2 & $k \leq \red{4}$ & $ k \le 5$ & $k \le 5$ & $k \le 5$\\
\hline
11 & 3 & 3 & $k \le \red{6}$ & $k \le 7$ & $k \le 7$ & $k \le 7$\\
\hline
12 & 3 & 3 & $k \le \red{6}$ & $k \le 7$ & $ k \le 7$ & $k \le 7$\\
\hline
14 & 4 & 6 & $k \le \red{8} $ & $k \le 9$ & $k \le 9$ & $k \le 10$\\
\hline 
17 & 7 & 2 & $k \le \red{5}$ & $k \le 6$ & $ k \le 6$ & $k \le 8$ \\
\hline
18 & 8 & 2& $k\le \red{5}$ & $k \le 6$ & $k \le 6$ & $k \le 8$\\
\hline
20 & 9 & 5 & $k \le 5$ & $k \le 6$ & $k \le 5$ & $k \le 11$\\
\hline 
20 & 11 & 2 & $k \le \red{2}$ & $k \le 3$ & $k \le 3$ & $k \le 7$\\
\hline 
\end{tabular}
\label{table_LPbound1}
\end{table}

\begin{table}[h!]
\caption{$q=3$}
    \centering
    \renewcommand\arraystretch{1.2}
\begin{tabular}{|c|c|c|c|c|c|c|} 
 \hline
$n$ & $d$ & $r$ & LP & SH with LP  & SH exact & gen. Singl.
  \\\noalign{\global\arrayrulewidth 1.8pt}
    \hline
    \noalign{\global\arrayrulewidth0.4pt}
10 & 2 & 4 & $k \le \red{7}$ & $k \le 8$ & $k \le 8$ & $k \le 8$\\
\hline
11 & 6 & 4 & $k \le \red{4}$ & $k \le 5$ & $k \le 5$ & $k \le 5$\\
\hline
11 & 5 & 5 & $k \le \red{5}$ & $k \le 6 $ & $k \le 6$ & $k \le 6$\\
\hline 
13 & 9 & 2 & $k \le \red{2}$ & $k \le 3$ & $ k \le 3$ & $k \le 4$\\
\hline
14 & 2 & 6 & $k \le \red{11}$ & $k \le 12$ & $k \le 12$ & $k \le 12$\\
\hline
14 & 9 & 3 & $k \le 3$ & $k \le 4$ & $k \le 3$ & $k \le 5$\\
\hline
18 & 6 & 5 & $k \le 10$ & $k \leq 11$ & $k \le 10 $ & $k \le 11$\\
\hline
25 & 5 & 5 & $k \le \red{15}$ & $k \le 17$ & $k \le 17$ & $k \le 18$\\
\hline
\end{tabular}
\label{table_LPbound2}
\end{table}

\begin{table}[h!]
\caption{$q=4$}
    \centering
    \renewcommand\arraystretch{1.2}
\begin{tabular}{|c|c|c|c|c|c|c|} 
 \hline
$n$ & $d$ & $r$ & LP & SH with LP  & SH exact & gen. Singl.
  \\\noalign{\global\arrayrulewidth 1.8pt}
    \hline
    \noalign{\global\arrayrulewidth0.4pt}
    9 & 3 & 3 & $k \le \red{5}$  &$k \le 6$ & $k \le 6$ & $k \le 6$\\
    \hline
    10 & 2 & 4 & $k \le \red{7}$ & $k \le 8$ & $k \le 8$ & $k \le 8$\\
    \hline
    11 & 8 & 2 & $k \le 2$ & $k \le 3$ & $k \le 2$ & $k \le 3$\\
    \hline
    12 & 2 & 5 & $k \le \red{9}$ & $k \le 10$ & $ k \le 10$ & $ k\le 10$\\
    \hline
    14 & 2 & 6 & $k \le \red{11}$ & $k \le 12$ & $k \le 12$& $k \le 12$\\
    \hline 
    15 & 10 & 4 & $k \le 4$ & $k \le 5$ & $k \le 4$ & $k \le 5$\\
    \hline
   15 & 8 & 6& $k \le 6$ & $k \le 7$ & $k \le 6$ & $ k \le 7$\\
   \hline
    20 & 2 & 9 & $k \le \red{17}$ & $k \le 18$ & $k \le 18$ & $k \le 18$\\
    \hline
\end{tabular}
\label{table_LPbound3}
\end{table}

\begin{table}[h!]
\caption{$q=5$}
    \centering
    \renewcommand\arraystretch{1.2}
 \begin{tabular}{|c|c|c|c|c|c|c|} 
 \hline
$n$ & $d$ & $r$ & LP & SH with LP & SH exact & gen. Singl.
  \\\noalign{\global\arrayrulewidth 1.8pt}
    \hline
    \noalign{\global\arrayrulewidth0.4pt}
    9 & 3 & 3 & $k \le \red{5}$ & $k \le 6$ & $k \le 6$  & $k \leq 6$\\
    \hline
    11 & 3 & 4 & $k \le \red{7}$ & $k \le 8$ & $k \le 8$  & $k \leq 8$\\
    \hline
    14 & 2 & 6 & $k \le \red{11}$ & $k \le 12$ & $ k\le 12 $ & $k \le 12$\\
    \hline
    16 & 2 & 7 & $k \le \red{13}$ & $k \le 14$ & $k \le 14 $ & $k \le 14$\\
    \hline 
    16 & 2 & 7 & $k \le \red{13} $& $k \le 14$ & $k \le 14$ & $k \le 14$\\
    \hline
    18 & 2& 8 & $k \le \red{15}$ & $k \le 16$ & $k \le 16 $ & $k \le 16$\\
    \hline
    22 & 2 & 10 & $k \le \red{19}$ & $k \le 20$ & $k \le 20 $ & $k \le 20$\\
    \hline
    24 & 2 & 11 & $k \le \red{21}$ & $k \le 22$ & $k \le 22$ & $k \le 22$\\
    \hline
\end{tabular}
\label{table_LPbound4}
\end{table}

In Tables~\ref{table_LPbound1}-\ref{table_LPbound4} we present some results obtained 
with the LP bound of Theorem~\ref{thm:lpbound}. 
For fixed parameters $(q,n,d,r)$, we give
an upper bound on the dimension $k$ of a code with those parameters.  
We compare our linear programming bound (denoted by ``LP'') with the bound of Theorem~\ref{thm:kopt} (denoted by ``SH'', for ``shortening'') and with the generalized Singleton Bound of Theorem~\ref{thm:single}.

Note that the bound of Theorem~\ref{thm:kopt} relies on the value of the maximum dimension of a code with prescribed length and minimum distance, $k_{\textnormal{opt}}^{q}(n,d)$. In the tables, we consider both 
estimates for these quantities given by  Delsarte's LP bound as well as their exact values. Since the latter can be computed only for very small parameters, the most fair comparison is the one where these quantities are estimated, rather than computed. Nonetheless, the tables below show that our LP bound often improves on 
the bound of Theorem~\ref{thm:kopt} even when we use the \textit{exact} value of 
$k_{\textnormal{opt}}^{q}(n,d)$
for computing it. We highlight
in {\color{red}red} when this happens. Although we present results for only four different field sizes due to spacing constraints, we additionally found  that the LP bound of Theorem~\ref{thm:lpbound} outperforms other bounds for larger field sizes $q$ as well.
The computations were performed using \texttt{SageMath}.

\section{Discussion and Future Work}
The results presented in this paper are first, important steps towards developing a complete duality theory of locally recoverable codes.
We have established a MacWilliams-type identity for a set of invariants that refine the weight distribution,
and that capture the locality of a code.

In this paper, we have focused on the applications of said duality theory. We have obtained two bounds for the parameters of a locally recoverable code. The first bound relates the locality of a code to the field size and the dual distance. The second bound uses an LP program and improves on the
best available bounds in several instances. 

An extended version of this paper is anticipated, where a more complete duality theory will be developed and yet other applications will be presented.

\bibliographystyle{ieeetr}
\bibliography{ourbib}
\end{document}